\newtheorem{prop}{Proposition}
\newtheorem{result}{Result}
\begin{document}
\title{Breaking absolute separability with quantum switch}

\author{Sravani Yanamandra}
\email{sravani.yanamandra@research.iiit.ac.in}

\author{P V Srinidhi}
\email{srinidhi.pv@research.iiit.ac.in}

\author{Samyadeb Bhattacharya}
\email{samyadeb.b@iiit.ac.in }

\author{Indranil Chakrabarty}
\email{indranil.chakrabarty@iiit.ac.in}
\affiliation{Centre for Quantum Science and Technology, International Institute of Information Technology Hyderabad, Gachibowli, Hyderabad-500032, Telangana, India.\\Center for Security, Theory and Algorithmic Research, International Institute of Information Technology Hyderabad, Gachibowli, Hyderabad-500032, Telangana, India.}

\author{Suchetana Goswami}
\email{suchetana.goswami@gmail.com}
\affiliation{Harish-Chandra Research Institute, A CI of Homi Bhabha National Institute, Chhatnag Road, Jhunsi, Allahabad 211 019, India.}

\begin{abstract}
    Absolute separable (AS) quantum states are those states from which it is impossible to create entanglement, even under global unitary operations. It is known from the resource theory of non-absolute separability that the
    set of absolute separable states forms a convex and compact set, and global unitaries are free operations. We show that the action of a quantum switch controlled by an ancilla qubit over the global unitaries can break this robustness of AS states and produce ordinary separable states. First, we consider bipartite qubit systems and find the effect of quantum switch starting from the states sitting on the boundary of the set of absolute separable states. As particular examples, we illustrate what happens to modified Werner states and Bell diagonal (BD) states. For the Bell diagonal states, we provide the structure for the set of AS BD  states and show how the structure changes under the influence of a switch. Further, we consider numerical generalization of the global unitary operations and show that it is always possible to take AS states out of the convex set under switching operations. We also generalized our results in higher dimensions.

\end{abstract}

\maketitle

\section{Introduction}
In 1935, Einstein, Podolsky, and Rosen (EPR) first pointed out the concept of non-local effect in quantum systems in terms of a paradox \cite{EPR_35}. Later in the same year, while trying to explain the paradox, Schr\"{o}dinger \cite{S_35} introduced the term ``entanglement'' and the concept of quantum steering \cite{S_35, WJD_07, JWD_07} which eventually leads to another quantum correlation i.e. Bell-nonlocality \cite{B_64}. From then on, in the literature of quantum information theory, the concept of entanglement as a quantum correlation has consistently been the most dominant one. It is not only of philosophical interest where it deals with deep concepts like realism and locality but also acts as a cardinal resource in various information processing tasks such as teleportation \cite{BBCJPW_93}, super dense coding \cite{BW_92} and many more \cite{BB_84, E_91}. In the recent developments in quantum computation, entanglement is an intricate resource for speed-ups compared to the corresponding classical counterparts \cite{LP_01, JL_03,V_13,GH_23} and also to minimize the effect of environmental noise \cite{DGPM_17}. When multipartite systems are considered, entanglement is one of the most basic forms of non-classical correlation that is also a resource. On the other hand, quantum steering is a stronger form of correlation than entanglement and the steerable states form a strict subset of entangled states. Finally, the Bell non-local states are a strict subset of steerable states. This hierarchy \cite{WJD_07} and application of all these non-local correlations are well explored in literature \cite{TR_11,BCWSW_12}. \\

Entangled states are those which are not separable in nature i.e. they can not be written as a convex mixture of product states of the corresponding subsystems. From the perspective of entanglement resource theory \cite{HHH_09}, the local operations and classical communication (LOCC) are the free operations, using which neither entanglement can be created from a separable state, nor can be increased. Hence in this resource theory, the separable states are the free states and they form a convex and compact set. Now it is natural to ask if one has free access to global operations, then is it possible to create entanglement from a separable state? The answer is in general affirmative but is not always true. There is a set of states which is so strongly separable that it is not possible to generate entangled states from them via any global unitary or even via the convex combination of them \cite{KZ_01, VAD_01, KC_01, LHL_03}. These states are called absolute separable (AS) states. Recently, in \cite{PMD_22} the resource theory of non-absolute separability is introduced where global unitaries are the free operations and the non-AS (NAS) states are the resourceful states. However, these absolute states are not restricted to only the entanglement-separability paradigm. There are other classes of absolute states such as absolute local states, absolute fully entangled fraction states, absolute conditional entropy non-negative states and many more \cite{PCG_17, BMRPMCJG_18, PMSCG_22}.\\

In the recent past, the literature of quantum information theory evidently illustrates that indefinite ordering of multiple causal events can give rise to advantages in terms of resource requirements in different quantum protocols. The concept of indefinite causal ordering was first pointed out in \cite{H_05, H_07} and was extended to resource theoretic frameworks in \cite{CDPV_13}. Exploiting this concept, the structure of quantum switch was introduced \cite{CDPV_13}, where we essentially consider an ancillary system that acts as a controller to the orders of the events. A stronger approach to this indefiniteness is by process matrix formalism introduced in \cite{OCB_12}. Indefinite causal ordering has played a significant role in increasing the efficacy of various information processing activities. These include winning non-local games \cite{OCB_12}, testing properties of quantum channel \cite{C_12}, minimizing quantum communication complexity \cite{GFAC_16}, improving quantum communication \cite{ESC_18, CBBGARSAK_21,Mitra_2023}, increasing the performance of quantum algorithm \cite{ACB_14}, activating non-Markovianity \cite{MB_22,agm_23} and many more. Recent developments in this domain are centered around providing experimental evidence \cite{PMACADHRBW_15, RRFAZPBW_17, GGKCBRW_18}. \\

In this work, we try to connect the effect of indefinite causal order with the possibility of generating some resourceful state starting from AS states using global unitary operations. From the resource theoretic point of view, if we have global unitary operations for free, then can we take states out of the convex set of AS states? We answer the question affirmatively with the help of a quantum switch. Here, we consider that the two global unitary operations represented by two channels $(\mathcal{N}_1, \mathcal{N}_2)$ are acting sequentially. The power of switching is induced by introducing the ancillary system, which dictates that with some probability $\mathcal{N}_2$ acts after $\mathcal{N}_1$ and $\mathcal{N}_1$ acts after $\mathcal{N}_2$ with rest of the probability. Next, to ensure the effect of indefinite causal ordering, we consider the superposition of these two scenarios. In the beginning, we consider AS states lying on the boundary of the convex set and show that by using global unitary with switching action, one can take these states out of this set. Next, we consider two-qubit modified Werner state and show how the range of the AS states changes with the value of the parameters in the presence of quantum switch. We notice that it is possible to make even a maximally mixed state resourceful by using the switching operation of suitably chosen unitary. In \cite{LC_10}, the authors identify the geometry of the separable BD states. Here, we extend the results for AS states and show how the structure changes in the presence of switching on global unitary operations. Additionally, we generalize our results numerically by Haar uniformly generating random global unitary matrices. We show our protocol is effective even in higher dimensions.

\section{Entanglement, separability and absolute separability}
Quantum entanglement is the most powerful resource not only in information processing tasks but also in quantum computation. Here we begin our study in the backdrop of two-qubit systems. Let the bipartite state shared between two parties $A$ and $B$ be $\rho_{AB}$. The composite system is associated with the Hilbert space $\mathcal{H}_{AB} = \mathcal{H}_A \otimes \mathcal{H}_B$. We denote the space of bounded linear operators acting on $\mathcal{H}$ by $\mathcal{B}(\mathcal{H})$ and the set of density operators by $\mathcal{D}(\mathcal{H})$. Now if $\rho_{AB}= \sum_i p_i \rho_A^i \otimes \rho_B^i$, then the state is a separable state. Here, $\rho_A^i$ and $\rho_B^i$ are the reduced states of the subsystems $A$ and $B$ respectively. When a quantum state cannot be represented by the above-mentioned separable form, it is entangled. This non-separability of quantum states acts as a resource in quantum information theory. In $2\otimes2$ and $2\otimes3$ dimensions, there exists a necessary and sufficient criterion to detect entanglement \cite{P_96, HHH_96}. It is based on the partial transposition of the corresponding density matrices. According to this criterion, all entangled states have negative partial transposition (NPT) while the separable states have positive partial transposition (PPT) in the above-mentioned dimension. As mentioned earlier, it is not possible to create entanglement from a separable state via LOCC. On the other hand, if one has access to global operations, then it is possible to get an entangled state from a separable one and hence it can be used as a resource in different protocols. \\
A quantum state is called absolute separable if any global unitary operation on the state does not result in entanglement. In other words, a bipartite absolute separable states  $\rho_{AB} \in \mathcal{H}_A \otimes \mathcal{H}_B$ is a state such that for all unitary matrices (or even with their successsive applications) $\mathcal{U} : \mathcal{H}_A \otimes \mathcal{H}_B \longrightarrow \mathcal{H}_A \otimes \mathcal{H}_B$, we have $\mathcal{U} \rho_{AB} \mathcal{U}^\dagger \in \mathcal{S}_{asep}$ where, $\mathcal{S}_{asep}$ is the set of all absolute separable states. The set of absolute separable states forms a convex and compact set \cite{GCM_14} and they play the role of free states in the resource theory of NAS \cite{PMD_22}. In general, it is a difficult task to detect AS states but for qubit-qudit dimension one can have a criterion based on the eigenvalues of the density matrix \cite{Hi_07, S_09, J_13}. In this case, the $2d$ eigenvalues, arranged in a non-increasing order, are $\{\lambda_i^{\downarrow}\}$ with $\sum_i \lambda_i^{\downarrow} =1$. Then a state is AS if and only if the eigenvalues accept the following condition,
\begin{equation}
    \lambda_1^{\downarrow} - \lambda_{2d-1}^{\downarrow} - 2\sqrt{\lambda_{2d-2}^{\downarrow} \lambda_{2d}^{\downarrow}} \leq 0
    \label{AS_eigenvalue}
\end{equation}
The equality of the above equation holds for the states on the boundary of the convex set containing AS states \cite{HMD_21}. From the above inequality, the following consequences emerge immediately. 
\begin{prop}
    In $2\otimes d$ dimension, there exists no AS state with rank $(2d-2)$. \cite{HMD_21}
    \label{prop1}
\end{prop}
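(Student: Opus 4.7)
The plan is to read off the conclusion directly from the eigenvalue criterion in Eq.~(\ref{AS_eigenvalue}), which the paper has already stated as a necessary and sufficient condition for absolute separability in the $2\otimes d$ setting. So the work is essentially just a substitution argument, followed by observing a contradiction.

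First I would recall that if a state $\rho$ on $\mathcal{H}_A\otimes\mathcal{H}_B$ with $\dim\mathcal{H}_A=2$, $\dim\mathcal{H}_B=d$ has rank exactly $2d-2$, then among its $2d$ eigenvalues (with multiplicity), exactly two vanish. Arranging them in non-increasing order $\lambda_1^{\downarrow}\geq\lambda_2^{\downarrow}\geq\cdots\geq\lambda_{2d}^{\downarrow}\geq 0$, the two zeros must sit in the last two slots, so $\lambda_{2d-1}^{\downarrow}=\lambda_{2d}^{\downarrow}=0$, while the remaining $\lambda_i^{\downarrow}$ (for $i\leq 2d-2$) are strictly positive and sum to $1$; in particular $\lambda_1^{\downarrow}>0$.

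Next I would plug these values into the AS criterion. The inequality
\[
\lambda_1^{\downarrow}-\lambda_{2d-1}^{\downarrow}-2\sqrt{\lambda_{2d-2}^{\downarrow}\,\lambda_{2d}^{\downarrow}}\leq 0
\]
reduces, with $\lambda_{2d-1}^{\downarrow}=\lambda_{2d}^{\downarrow}=0$, to $\lambda_1^{\downarrow}\leq 0$. This contradicts $\lambda_1^{\downarrow}>0$, so $\rho$ cannot be absolute separable. Since the argument depends only on the rank and on the necessary direction of the criterion, it applies to every density operator of rank $2d-2$ in $2\otimes d$ dimensions, giving the claim.

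There is no real obstacle here beyond making sure the criterion in Eq.~(\ref{AS_eigenvalue}) is stated as both necessary and sufficient (so that its failure rules out absolute separability), and being careful that ``rank $2d-2$'' indeed forces the two smallest eigenvalues to vanish and the largest one to be positive — both of which are immediate. Thus the proposition follows as a one-line corollary of the spectral AS criterion.
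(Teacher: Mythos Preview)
Your proposal is correct and follows essentially the same approach as the paper: substitute $\lambda_{2d-1}^{\downarrow}=\lambda_{2d}^{\downarrow}=0$ into the eigenvalue criterion of Eq.~(\ref{AS_eigenvalue}) and derive the contradiction $\lambda_1^{\downarrow}\leq 0$. The only minor difference is that the paper phrases the rank hypothesis as $r(\rho_{AB})\leq 2d-2$, which immediately covers all smaller ranks as well, but the core argument is identical.
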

\begin{proof}
    The proof of the statement follows from \cite{HMD_21}. Note that, if we have a bipartite state $\rho_{AB}$ with rank $r(\rho_{AB}) \leq (2d-2)$, then we can write the $2d$ number of eigenvalues of the density matrix of $\rho_{AB}$ as, $\{\lambda_1^{\downarrow}, \lambda_2^{\downarrow}, \cdots, \lambda_{2d-2}^{\downarrow}, 0, 0\}$ without any loss of generality (the eigenvalues are arranged in a non-increasing manner). Hence using Eq.(\ref{AS_eigenvalue}), we have the condition on the largest eigenvalue as, $\lambda_1^{\downarrow} \leq 0$, which is not possible. So, one can conclude that there exists no rank $(2d-2)$ AS state in $2\otimes d$ dimension.
\end{proof}
\noindent Now let us consider the case when the rank of the state is $r(\rho_{AB})=(2d-1)$ and the eigenvalues can be written as, $\{\lambda_1^{\downarrow}, \lambda_2^{\downarrow}, \cdots, \lambda_{2d-1}^{\downarrow}, 0\}$. In this case, we have $\lambda_1^{\downarrow} \leq \lambda_{2d-1}^{\downarrow}$ from Eq.(\ref{AS_eigenvalue}). As the eigenvalues are already arranged in non-increasing order, we can set, $\lambda_1^{\downarrow}=\lambda_2^{\downarrow} \cdots= \lambda_{2d-1}^{\downarrow} = \lambda$ (say). Hence, $\lambda=\frac{1}{2d-1}$ as, $\sum_i \lambda_i^{\downarrow} =1$. For example, in a bipartite qubit system, we can have a state $\rho_{AB}=\frac{1}{3} (\ket{00}\bra{00}+\ket{01}\bra{01}+\ket{10}\bra{10})$. Note that from the eigenvalue condition, it is clear that this state resides on the boundary of the convex set in that dimension. At the beginning of our calculation, we consider such types of states and see the action of switching unitary on them. 

\section{Application of quantum switch on global unitary operation}

In this section, we briefly discuss the action of quantum switch as defined in the literature of quantum information theory and further illustrate the action of the same while the channels under consideration are global unitaries. Quantum switch is a higher-order map that takes two or more channels as inputs and then outputs the superposition of their orders based on the state of the control qubit. With a quantum switch, one can in principle control the action of the channel with an ancillary system, which is the control. As mentioned in the introduction, here also we consider two channels $\mathcal{N}_1$ and $\mathcal{N}_2$. When the state of the control qubit is $\ket{0}\bra{0}$, the channel $\mathcal{N}_2$ acts before the channel $\mathcal{N}_1$, the action on the system can be written mathematically as, $(\mathcal{N}_1 \circ \mathcal{N}_2) \rho_{AB}$. On the other hand, when the ancillary qubit is in state $\ket{1}\bra{1}$, the action of the channels on the system is given by, $(\mathcal{N}_2 \circ \mathcal{N}_1) \rho_{AB}$. Now we consider the superposition of these two situations to obtain the final switching action. 

Now let us consider that the Kraus operators corresponding to the channels  $\mathcal{N}_1$ and $\mathcal{N}_2$ are given by, $\{K_{i}^{(1)}\}$ and $\{K_{j}^{(2)}\}$ respectively, with $\sum_i {K^{(1)}_i}^{\dagger} K^{(1)}_i = \openone$ and $\sum_j {K^{(2)}_j}^{\dagger} K^{(2)}_j = \openone$. Hence the joint Kraus operator representing the complete action of the switch can be written as, 
 \begin{equation}
    W_{ij} = K_{i}^{(1)} \cdot K_{j}^{(2)} \otimes \ket{0}\bra{0} + K_{j}^{(2)} \cdot K_{i}^{(1)} \otimes \ket{1}\bra{1}.
    \label{switch_Kraus}
\end{equation}
Hence the final joint system-ancilla state can be written as, 
\begin{equation}
    S(\mathcal{N}_1,\mathcal{N}_2)(\rho \otimes \rho_c) = \sum_{i,j} W_{ij}(\rho_{AB} \otimes \rho_c)W_{ij}^{\dagger}.
    \label{switchaction}
\end{equation}
Here, $\rho_c$ is the initial state of the control qubit. In the end, the control qubit is measured in the coherent basis i.e. $\{\ket{+},\ket{-}\}$ and the final state of the system is obtained corresponding to each outcome as, 
\begin{equation}
    \rho_f = \frac{1}{n}~ _c\langle\pm|S(\mathcal{N}_1,\mathcal{N}_2)(\rho_{AB} \otimes \rho_c)|\pm\rangle_c
    \label{FS}
\end{equation}
with $\ket{\pm}=\frac{1}{2}(\ket{0}\pm\ket{1})$ and $n$ is the constant for normalization. 
In our case, the two channels under consideration are two global unitary matrices acting on the bipartite system. Let the unitary matrices be $\mathcal{U}_1$ and $\mathcal{U}_2$ and hence the action of the switch can be simplified as given in the following Proposition.
\begin{prop}
The final state of the system after the measurement, considering both the control qubit and the projective measurement on the control qubit in the $\ket{+}$ basis becomes,
\begin{equation}
    \rho_f=\frac{\mathcal{L}_s \rho_{AB} \mathcal{L}_s^{\dagger}}{ Tr[\mathcal{L}_s \rho_{AB} \mathcal{L}_s^{\dagger}]}
    \label{finalstate_switch}
\end{equation}
where, the switching action of the unitary matrices can be seen as, $\mathcal{L}_s=\frac{1}{2}(\mathcal{U}_1 \mathcal{U}_2+\mathcal{U}_2 \mathcal{U}_1)$.
\label{prop2}
\end{prop}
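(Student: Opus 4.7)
The plan is to specialize the general switch action in Eq.~(\ref{switchaction}) to the case where the two channels are unitary, and then to project the resulting system-control output onto $\ket{+}$ in the coherent basis as prescribed by Eq.~(\ref{FS}). Since each unitary channel has a single Kraus operator, namely $K^{(1)}_1 = \mathcal{U}_1$ and $K^{(2)}_1 = \mathcal{U}_2$, the double sum over $i,j$ in Eq.~(\ref{switch_Kraus}) collapses to a single combined Kraus operator
\begin{equation}
W = \mathcal{U}_1\mathcal{U}_2 \otimes \ket{0}\bra{0} + \mathcal{U}_2\mathcal{U}_1 \otimes \ket{1}\bra{1}.
\end{equation}

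Next I would fix the initial control state as $\rho_c = \ket{+}\bra{+}$, which is the standard choice needed to produce interference between the two causal orders. Writing $\ket{+} = (\ket{0}+\ket{1})/\sqrt{2}$ and acting with $W$ on $\rho_{AB}\otimes\ket{+}\bra{+}$, the joint output contains four control-register blocks of the form $\ket{a}\bra{b}$ with $a,b \in \{0,1\}$, each multiplied by the corresponding product $\mathcal{U}_a \rho_{AB} \mathcal{U}_b^{\dagger}$ where $\mathcal{U}_0 := \mathcal{U}_1\mathcal{U}_2$ and $\mathcal{U}_1 := \mathcal{U}_2\mathcal{U}_1$. Sandwiching this operator between $\bra{+}_c$ and $\ket{+}_c$ sums all four blocks with equal weight $1/2$, yielding $\tfrac{1}{4}(\mathcal{U}_1\mathcal{U}_2 + \mathcal{U}_2\mathcal{U}_1)\,\rho_{AB}\,(\mathcal{U}_1\mathcal{U}_2 + \mathcal{U}_2\mathcal{U}_1)^{\dagger}$, which is exactly $\mathcal{L}_s\,\rho_{AB}\,\mathcal{L}_s^{\dagger}$ up to a positive prefactor.

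To close the argument I would fix the normalization $n$ in Eq.~(\ref{FS}) by demanding $\text{Tr}[\rho_f]=1$; this sets $n = \text{Tr}[\mathcal{L}_s \rho_{AB} \mathcal{L}_s^{\dagger}]$ and absorbs the leftover $1/4$, reproducing Eq.~(\ref{finalstate_switch}) exactly.

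I do not anticipate a real obstacle, because the derivation is essentially a bookkeeping exercise once the unitary specialization is made; the only care required is in tracking the ket-bra structure on the control register before the $\ket{+}$ projection, and in absorbing all numerical prefactors into $n$. The one conceptual remark worth making in the write-up is that the symmetrized combination $\mathcal{L}_s = \tfrac{1}{2}(\mathcal{U}_1\mathcal{U}_2 + \mathcal{U}_2\mathcal{U}_1)$ survives the $\ket{+}$ projection precisely because the cross blocks add in phase; the $\ket{-}$ outcome would instead produce the anti-symmetric combination proportional to the commutator $[\mathcal{U}_1\mathcal{U}_2,\mathcal{U}_2\mathcal{U}_1]$-like object, motivating the $\ket{+}$ post-selection used throughout the rest of the paper.
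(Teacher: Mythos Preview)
Your proposal is correct and follows essentially the same route as the paper: specialize the switch Kraus operator of Eq.~(\ref{switch_Kraus}) to the single-Kraus unitary case, take $\rho_c=\ket{+}\bra{+}$, sandwich with $\bra{+}_c\cdots\ket{+}_c$, and normalize. The only cosmetic difference is that the paper writes the intermediate step as a factorized expression with coefficients $|\braket{0|+}|^2$ and $|\braket{1|+}|^2$ rather than expanding the four control blocks, but the content is identical.
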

\begin{proof}
    We start the proof by incorporating Eq. (\ref{switchaction}) into Eq. (\ref{FS}) and therefore we have,
    \begin{equation}
        \rho^{\prime}_{f} =~ _c\bra{+}\sum_{i,j} W_{ij}(\rho_{AB} \otimes \rho_c)W_{ij}^{\dagger}|\ket{+}_c.
    \end{equation}
    Evidently, $\rho_f= \rho^{\prime}_{f}/n$ with $n= Tr [\rho^{\prime}_{f}]$. Let us now simplify the preliminary calculation by considering that the initial state is given by $\rho_c=\ket{+}\bra{+}$. The calculation elaborated in the following is for the case when the control qubit is measured in the $\ket{+}$ state. Hence we have,
    \begin{eqnarray}
        \rho^{\prime}_{f} &=~& _c\bra{+} (\mathcal{U}_1 \mathcal{U}_2 \otimes \ket{0}\bra{0} + \mathcal{U}_2 \mathcal{U}_1 \otimes \ket{1}\bra{1}) \nonumber\\
        &&(\rho_{AB}\otimes \ket{+}\bra{+}) \nonumber\\
        &&(\mathcal{U}^{\dagger}_2 \mathcal{U}^{\dagger}_1 \otimes \ket{0}\bra{0} + \mathcal{U}^{\dagger}_1 \mathcal{U}^{\dagger}_2 \otimes \ket{1}\bra{1}) \ket{+}_c \nonumber\\
        &=& (\mathcal{U}_1 \mathcal{U}_2~|\braket{0|+}|^2 + \mathcal{U}_2 \mathcal{U}_1~|\braket{1|+}|^2 ) \rho_{AB} \nonumber\\
        && (\mathcal{U}^{\dagger}_2 \mathcal{U}^{\dagger}_1~ |\braket{0|+}|^2 + \mathcal{U}^{\dagger}_1 \mathcal{U}^{\dagger}_2~|\braket{1|+}|^2 ) \nonumber\\
        &=& \mathcal{L}_s \rho_{AB} \mathcal{L}^{\dagger}_s, 
    \end{eqnarray}
    with $\mathcal{L}_s=\frac{1}{2}(\mathcal{U}_1 \mathcal{U}_2+\mathcal{U}_2 \mathcal{U}_1)$. Now it is clear that $n=Tr[\mathcal{L}_s \rho_{AB} \mathcal{L}^{\dagger}_s]$ and the final state of the system after the switching action is given by the expression in Eq. (\ref{finalstate_switch}). Hence the claim.
\end{proof}


\section{Breaking absolute separability with quantum switch}

In this section, we discuss the results of ``breaking" absolute separability via quantum switch. Here, by ``breaking" we mean having a separable state from a AS state via global unitary. We start by considering a very simple case to see if it is at all possible. In a bipartite qubit system, let us consider an AS state. Note that, the rank of the state can either be $3$ or $4$.

\begin{result}
    Let us consider a rank $3$ AS state $\rho_{AB}=\frac{1}{3} (\ket{00}\bra{00}+\ket{01}\bra{01}+\ket{10}\bra{10})$ which lies on the boundary of the convex set in $2\otimes 2$ dimension. Under the switching action of two global unitaries, one being CNOT gate and the other given by, 
    \begin{equation}
        U = 
    \begin{pmatrix}
        cos\theta & 0 & 0 & sin\theta\\
        0 & cos\theta & sin\theta & 0\\
        0 & sin\theta & -cos\theta & 0\\
        sin\theta & 0 & 0 & -cos\theta
    \end{pmatrix},
    \label{unitary_theta}
    \end{equation}
    the resulting state is not an AS state.
\end{result}

\begin{proof}
    We know that the CNOT gate is described by,
    \begin{equation}
    V_{CNOT}= \left(
    \begin{array}{cccc}
     1 & 0 & 0 & 0 \\
     0 & 1 & 0 & 0 \\
     0 & 0 & 0 & 1 \\
     0 & 0 & 1 & 0 \\
    \end{array}
    \right)
    \end{equation}
    Now, following Eq.(\ref{finalstate_switch}) one can obtain the final state where $\mathcal{U}_1$ and $\mathcal{U}_2$ are CNOT and $U$. The eigenvalues of the state arranged in a non-increasing manner are given by, $\{ \frac{4}{3(3+cos2\theta)}, \frac{1}{3}, \frac{2(1+cos2\theta)}{3(3+cos2\theta)},0 \}$. Note that, with these eigenvalues, the inequality in Eq. (\ref{AS_eigenvalue}) reduces to $\cos{(2\theta)}\geq 1$, which is itself contradictory. Hence the final state is residing outside the set of AS states in the given dimension. Now with PPT criterion, we verify that the state is a separable state. Hence we obtain a separable state starting from an AS state.
\end{proof}

\subsection*{Results on modified Werner states}

As can be seen from the above result, the AS state residing on the boundary of the set can be taken out from the set by using quantum switch. Now let us consider a more general form of a state in $2\otimes2$ dimension, the modified Werner state \cite{W_89} and it can be written as, 
\begin{equation}
    \rho_W = p \ket{\xi}\bra{\xi}+ \frac{1-p}{4} \mathcal{I}_{4 \times 4}
    \label{mod_Wstate}
\end{equation}
where $\ket{\xi}$ being a pure state of the form, $\ket{\xi} = \cos{\gamma}\ket{00} + e^{i\phi} \sin{\gamma}\ket{11}$, with $0 \leq p \leq 1$, $0 \leq \gamma \leq \pi$ and $0 \leq \phi \leq 2\pi$. It is already well established that the state in Eq. (\ref{mod_Wstate}) is AS for the range $0\leq p \leq \frac{1}{3}$ and it is separable in the region $\frac{1}{3}\leq p\leq \frac{1}{1+2\sin{2 \gamma}}$. Next we move a step further and show that it is possible to achieve separability for a larger region of $p$ when subjected to a global unitary channel via switching order. 
    
\begin{result}
    If the initial state is $\rho_W$ given in Eq.(\ref{mod_Wstate}) and it undergoes the switching operation between two unitary channels, one being CNOT and the other given in Eq. (\ref{unitary_theta}), then it is possible to achieve separability from the region of AS states.
\end{result}

\begin{figure}[htp]
\centering
\fbox{
\subfigure[The eigenvalues of the final state after the switching action on the state in Eq. (\ref{mod_Wstate}) are plotted against the global unitary parameter, $\theta$. Here we consider $p=0.15$ for which  the initial state is AS.]{\includegraphics[scale=0.25]{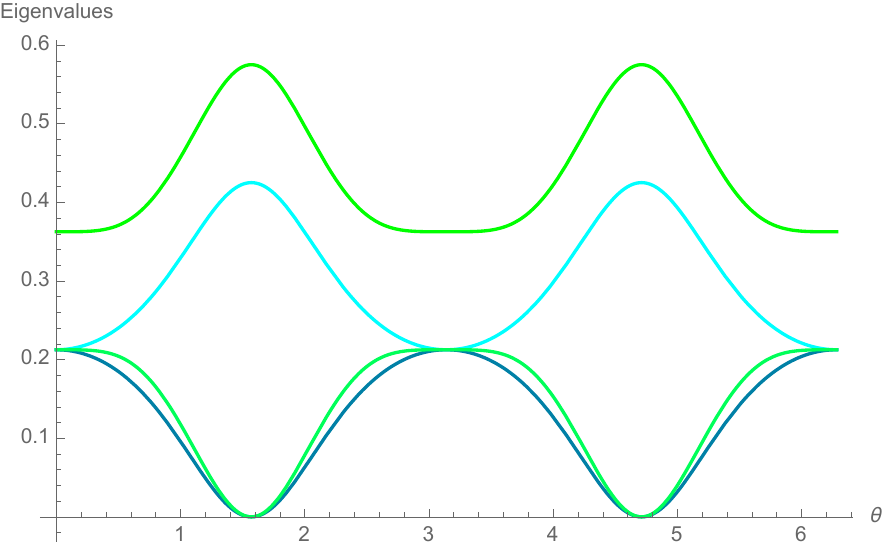}}
\qquad
\subfigure[The same eigenvalues are plotted against the noise mixing parameter $p$ ($p$ is varying in the range of AS states) for a fixed $\theta = 1.2$. ]{\includegraphics[scale=0.22]{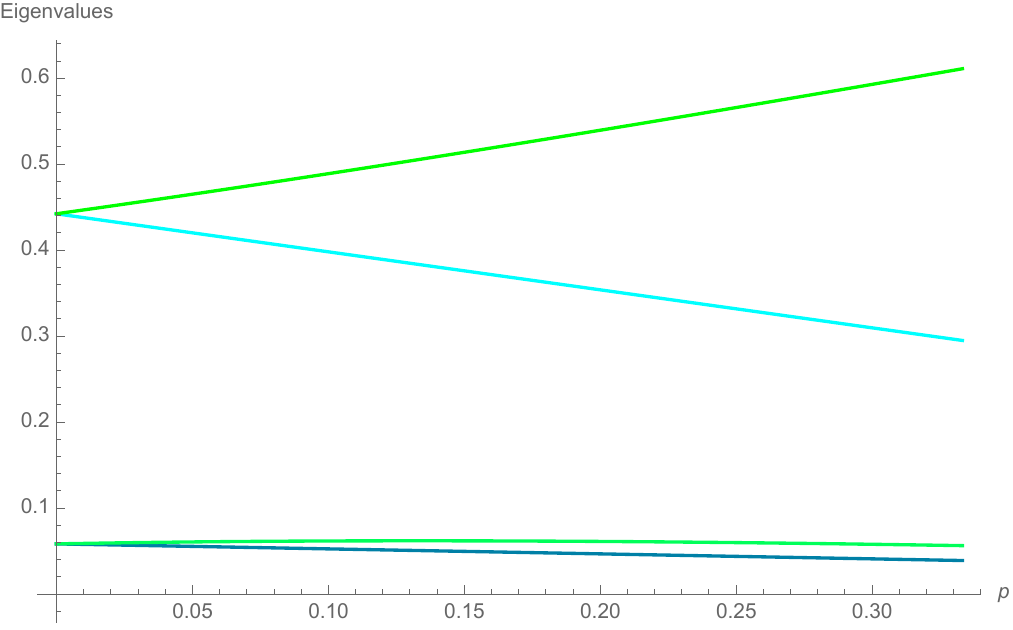}}}
\caption{\footnotesize{The eigenvalues of the final state}}
\label{eval_theta_modW}
\end{figure}

   \noindent  To prove the statement, we only need to demonstrate an example in which the stated transformation is possible. To begin the demonstration, we observe that the eigenvalues of the initial state in Eq. (\ref{mod_Wstate}) are given by $\{ \frac{1+3p}{4}, \frac{1-p}{4},\frac{1-p}{4},\frac{1-p}{4} \}$, arranged in the non-increasing order. Note that, the eigenvalues are independent of the entanglement content of the pure state $\ket{\xi}$. Now using the process prescribed via Eq. (\ref{finalstate_switch}) with $\mathcal{U}_1=U$ given in Eq. (\ref{unitary_theta}) and $\mathcal{U}_2=CNOT$, we get the eigenvalues of the final state as shown in the Fig. (\ref{eval_theta_modW}). The eigenvalues are then arranged in non-increasing order for the whole range of the noise mixing parameter $p$ of the state and the unitary matrix parameter $\theta$. 
    
    \begin{figure}[htp]
    \fbox{\includegraphics[scale=0.5]{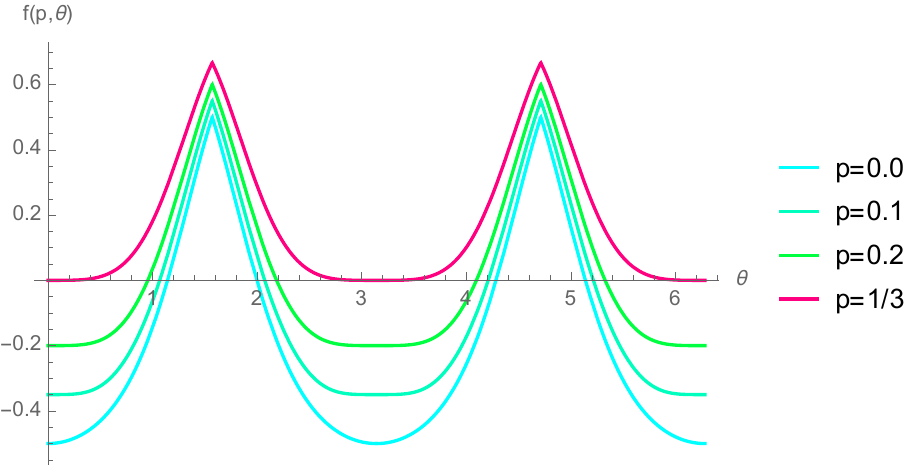}}
    \caption{\footnotesize{The LHS of the eigenvalue condition given in Eq. (\ref{AS_eigenvalue}) is plotted against the unitary matrix parameter for different values of state parameter of AS states.}}
    \label{LHS_eval_cond_modW}
    \end{figure}
    
    In the next step, to check whether the final state remains an AS or not we further evaluate the LHS of Eq. (\ref{AS_eigenvalue}) and obtain the expression as a function of $p$ and $\theta$ (say, $f(p, \theta)$). Plotting this function shows that the condition given in Eq. (\ref{AS_eigenvalue}) is not satisfied for a broad range of $p$ as well as $\theta$ as shown in Fig (\ref{LHS_eval_cond_modW}). Hence for a given choice of unitary it is possible to extend the range of $p$ for which the AS states become separable.

After proving that it is evidently possible to break absolute separability of modified Werner state via global unitary matrices while employing them with switching order, we look at the results a little bit more carefully. From Fig. (\ref{LHS_eval_cond_modW}) it is clear that for all the values of noise mixing parameter $p$ of the state, there exists certain type of global unitary (corresponding to given values of $\theta$ of $U$ given in Eq. (\ref{unitary_theta})) along with a CNOT gate, which can take out the AS state out of the convex set. Now, looking at Fig (\ref{eval_theta_modW}.(a)) we see that out of four eigenvalues of the final states, when one becomes $0$ for any value of $p$, another one also goes to $0$. At the same time, the highest eigenvalue takes the largest value. This might happen for several unitary matrices, one being $\theta=\pi/2$ in $U$. Correspondingly, the violation of Eq. (\ref{AS_eigenvalue}) becomes nothing but the largest eigenvalue. For $\theta=\pi/2$, the largest eigenvalue becomes $(1+p)/2$. Note that, the violation hence behaves as a monotonically increasing function of the noise mixing parameter $p$. Interesting even for $p=0$, when the corresponding initial state is a maximally mixed state, it is possible to have the condition for absolute separability violated for certain $\theta$. We emphasise here on the fact that the maximally mixed state $\mathcal{I}/4$ (in $2\otimes2$), which is a free state in every resource theory, it is possible to make it a separable state if one has access to global unitary operations making the initial state resourceful. On the other hand for $p=1/3$, the modified Werner states lie in the boundary of AS states, on the verge of becoming separable. Hence, intuitively it can be assumed that these are the states that are the easiest to take out from the set of convex states containing the AS states. Our result confirms the same as it can be seen from the plot in Fig. (\ref{LHS_eval_cond_modW}) that the choice of effective unitary is extensively high. 

\begin{figure}[htp]
\centering
\fbox{
\subfigure[For $p=0$ (maximally mixed state)]{\includegraphics[scale=0.155]{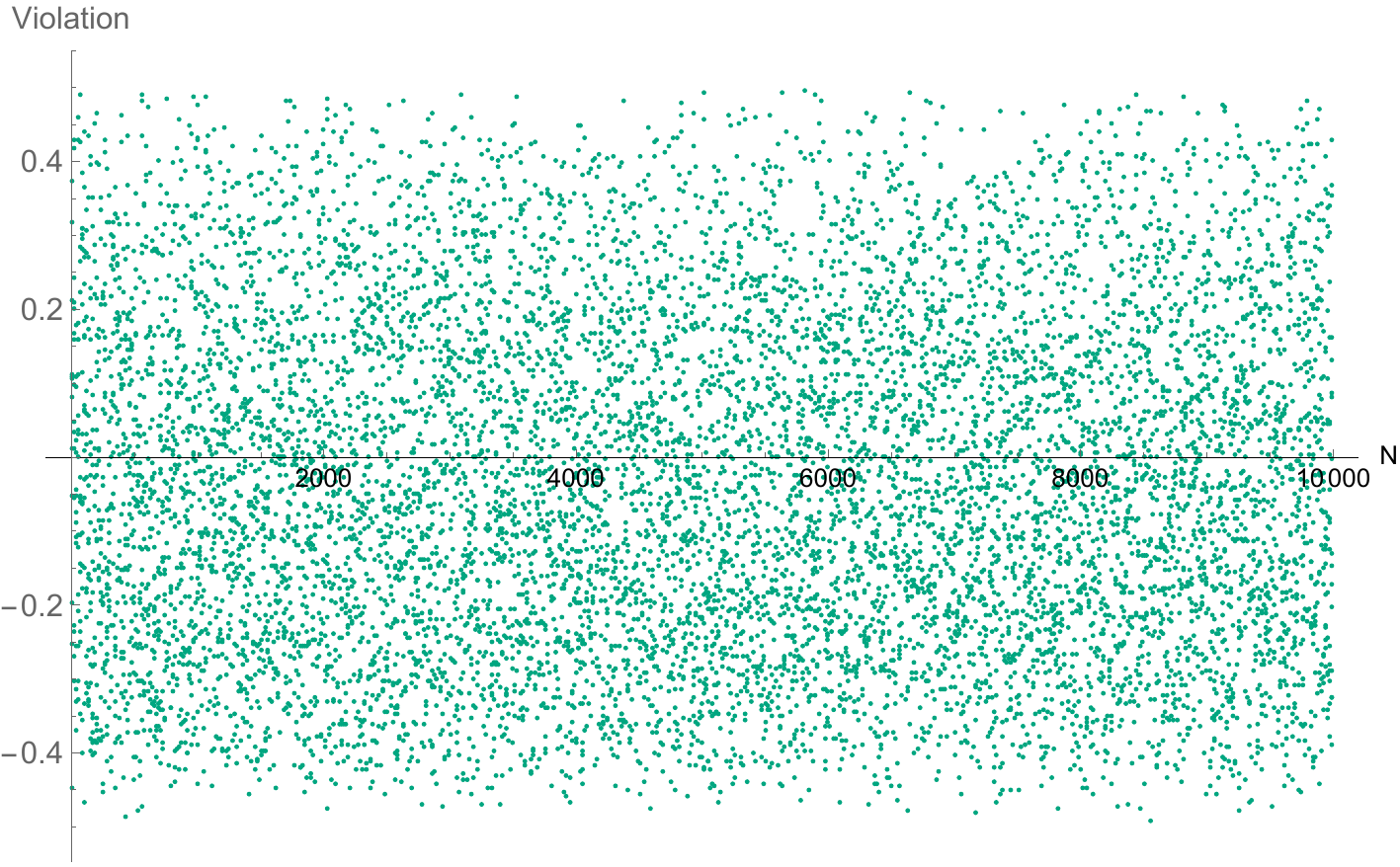}}
\qquad
\subfigure[For $p=1/3$, $\phi=0$, and $\gamma=\pi/4$]{\includegraphics[scale=0.155]{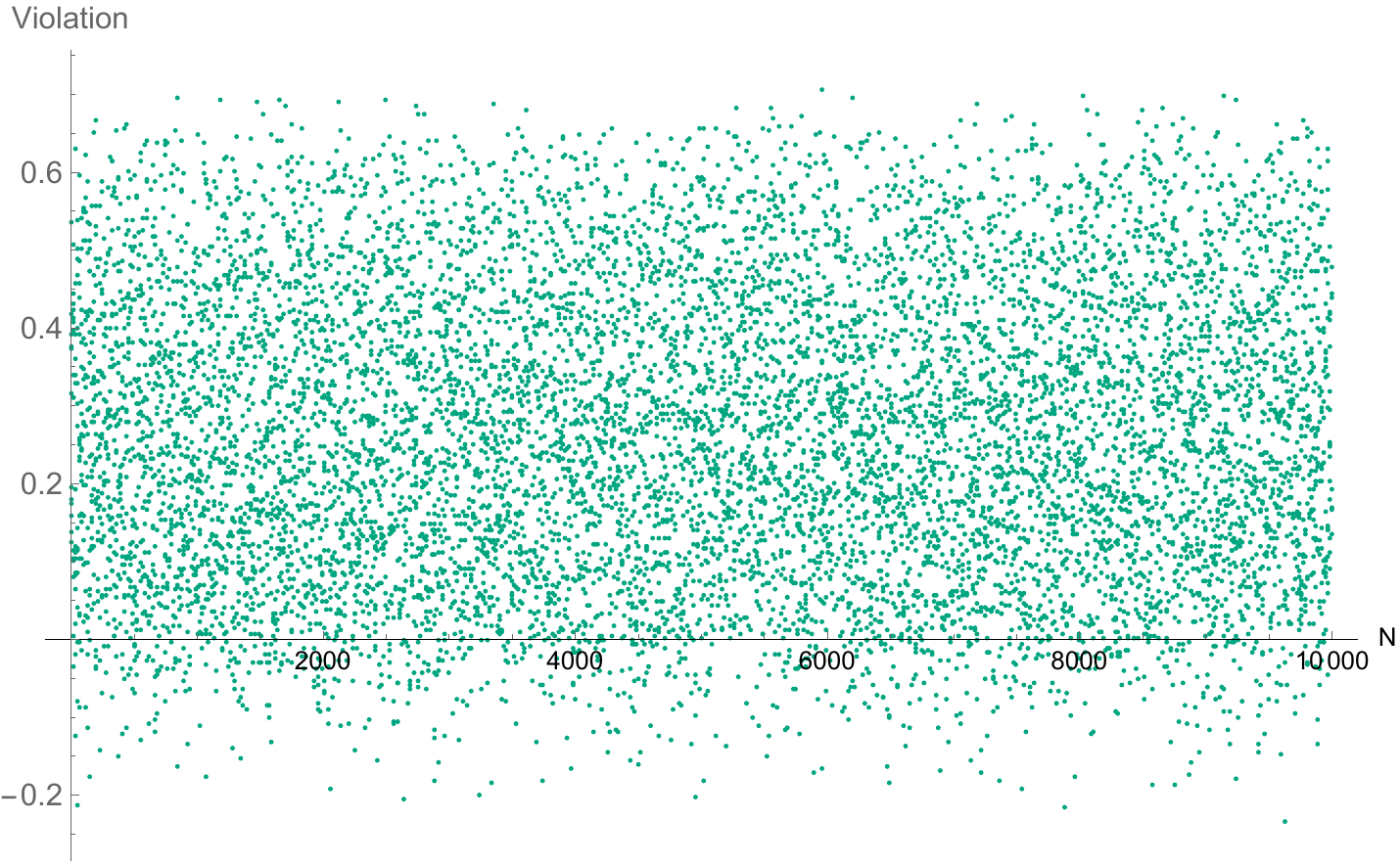}}}
\caption{\footnotesize{We plot the LHS of Eq.(\ref{AS_eigenvalue}) for $2\otimes2$ dimension (Violation) against the number of  Haar uniformly generated random unitary matrices (N) for two choices of noise mixing parameter $p$ for modified Werner state.}}
\label{RandomU_modW}
\end{figure}

Now to generalise our result numerically by considering that we have access to one CNOT gate and the other unitary is made completely random. We generate random $100000$ unitary matrices Haar uniformly. Here, we consider two cases corresponding to two initial states, (a) maximally mixed state in $2\otimes2$ dimension i.e. $p=0$ in Eq. (\ref{mod_Wstate}), and (b) $p=1/3$, $\phi=0$, and $\gamma=\pi/4$ in Eq. (\ref{mod_Wstate}), giving us a mixture of white noise with maximally entangled state in $2\otimes2$. For illustration, we plot the cases for $10000$ randomly generated unitary matrices in Fig.(\ref{RandomU_modW}). Note that, both the plots evidently show that it is always possible to find a global unitary operation that makes a AS state a separable one by switching operation with CNOT gate. As can be guessed intuitively, the number of useful unitary matrices to break absolute separability, for Case (b) is more than that for in Case (a) as the state in Case (b) is lying on the boundary of the convex set. We extend our result to higher dimension ($2\otimes d$) by considering maximally mixed state initially while both the random unitary matrices are generated Haar uniformly. We check the absolute separability criterion given in Eq. (\ref{AS_eigenvalue}) and plot the violation. The plots with more details can be found in the appendix.\\
Another interesting observation that can be made from the results of switching action on modified Werner state is about the rank of the final state. The initial state given in Eq. (\ref{mod_Wstate}) is a full-rank state for the whole range of noise mixing parameter $p$. After the switching action, the rank of the final state remains unaltered in most of the cases except for $\theta=\pi/2$ and for odd multiples of $\pi/2$. In these exceptional cases the rank decreases to 2 and making the final state certainly separable as can be seen from Proposition (\ref{prop1}). At the same time, from numerical simulation we find that rank remains unchanged for all the cases. 

\subsection*{Results on Bell diagonal states}

Next we move our discussion to the set of Bell-diagonal (BD) states. For that, we first introduce and elaborate the properties of this particular state in the following. The two-qubit Bell diagonal states are the probabilistic mixture of maximally entangled states given in the following form,
\begin{eqnarray}
    \rho_{BD} &=& p_1 \ket{\phi^+}\bra{\phi^+} + p_2 \ket{\phi^-}\bra{\phi^-} + p_3 \ket{\psi^+}\bra{\psi^+} \nonumber \\
    && + p_4 \ket{\psi^-}\bra{\psi^-}
    \label{BD_state_p}
\end{eqnarray}
with, $p_1+p_2+p_3+p_4=1$. Here, $\ket{\phi^{\pm}}=(\ket{00}\pm \ket{11})/\sqrt{2}$ and $\ket{\psi^{\pm}}=(\ket{01}\pm \ket{10})/\sqrt{2}$ are four Bell states. In the Bell-basis, this is a diagonal state. From the partial transposition criterion, it is evident that $\rho_{BD}$ is a separable state if all the mixing probabilities are less than or equal to $1/2$. Note that, the eigenvalues of the state in Eq. (\ref{BD_state_p}) are given by, $\{ p_1, p_2, p_3, p_4\}$. Another form of the state is \cite{LC_10},
\begin{equation}
    \rho_{BD} = \frac{1}{4} (\mathcal{I}+\sum_{i=1}^3 c_i \sigma_i \otimes \sigma_i)
    \label{BD_state_c}
\end{equation}
with $\sigma_i$'s being the Pauli matrices.

\begin{figure*}[htp]
\centering
\fbox{
\subfigure[Structure of Bell Diagonal states. The octahedron represents the set of separable states \cite{LC_10}, and the inner structure represents the AS states.]{\includegraphics[scale=0.3]{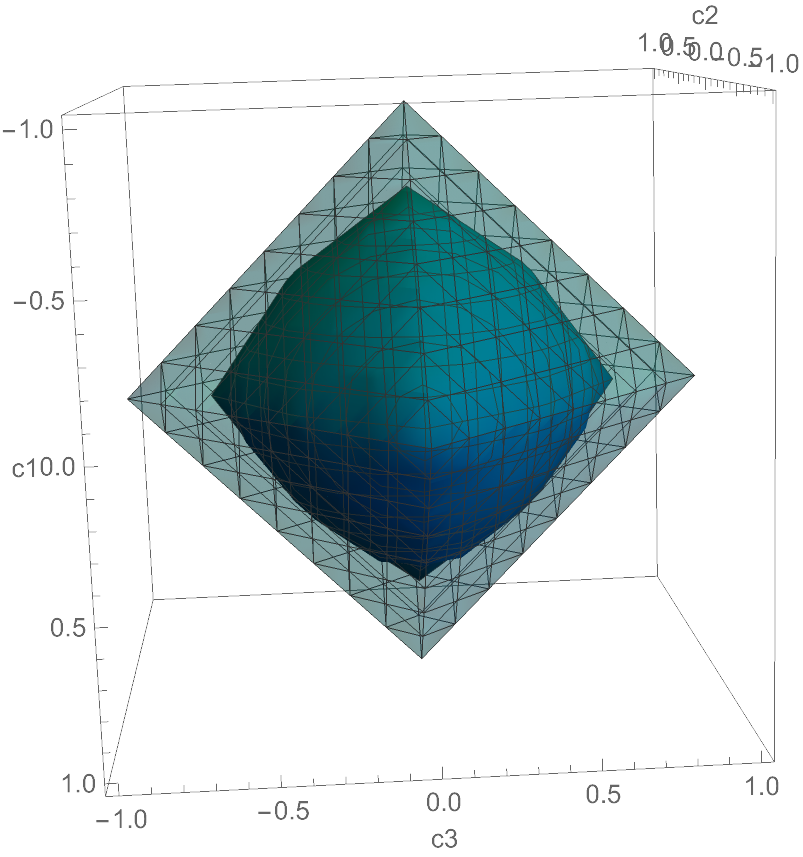}}
\qquad
\subfigure[Structure of BD AS states after switching action for Unitary corresponding to $\theta=\pi/6$]{\includegraphics[scale=0.2]{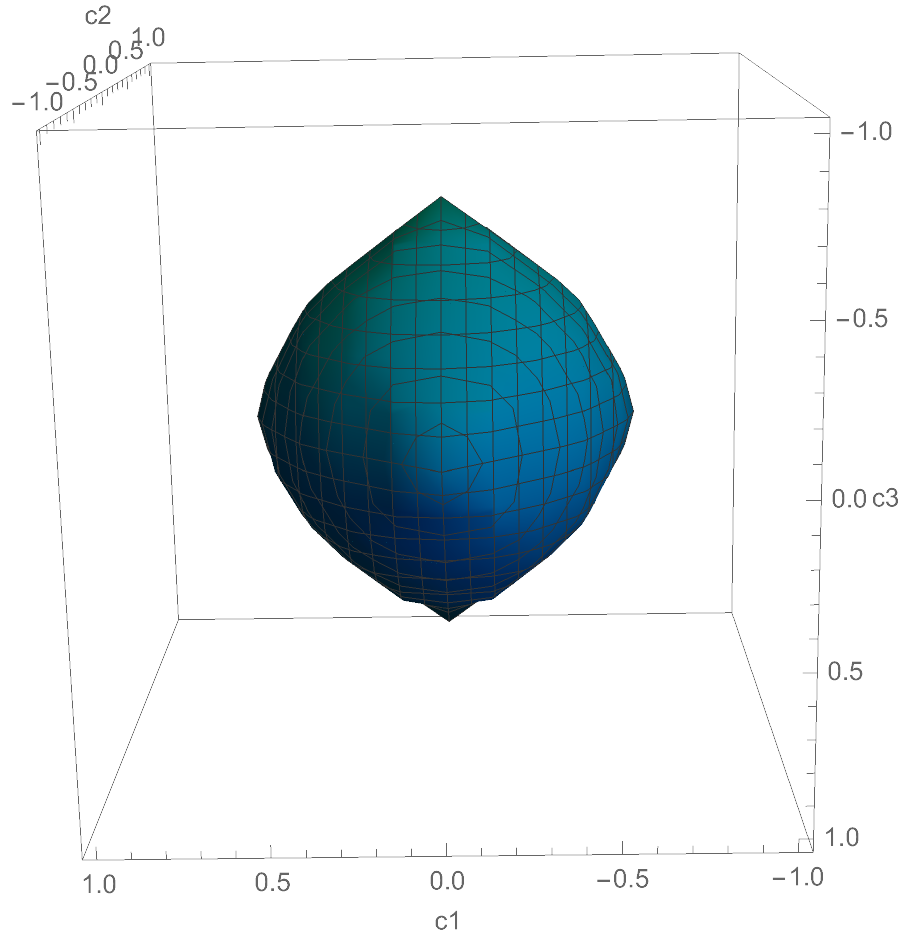}}
\qquad
\subfigure[Structure of BD AS states after switching action for Unitary corresponding to $\theta=\pi/4$]{\includegraphics[scale=0.2]{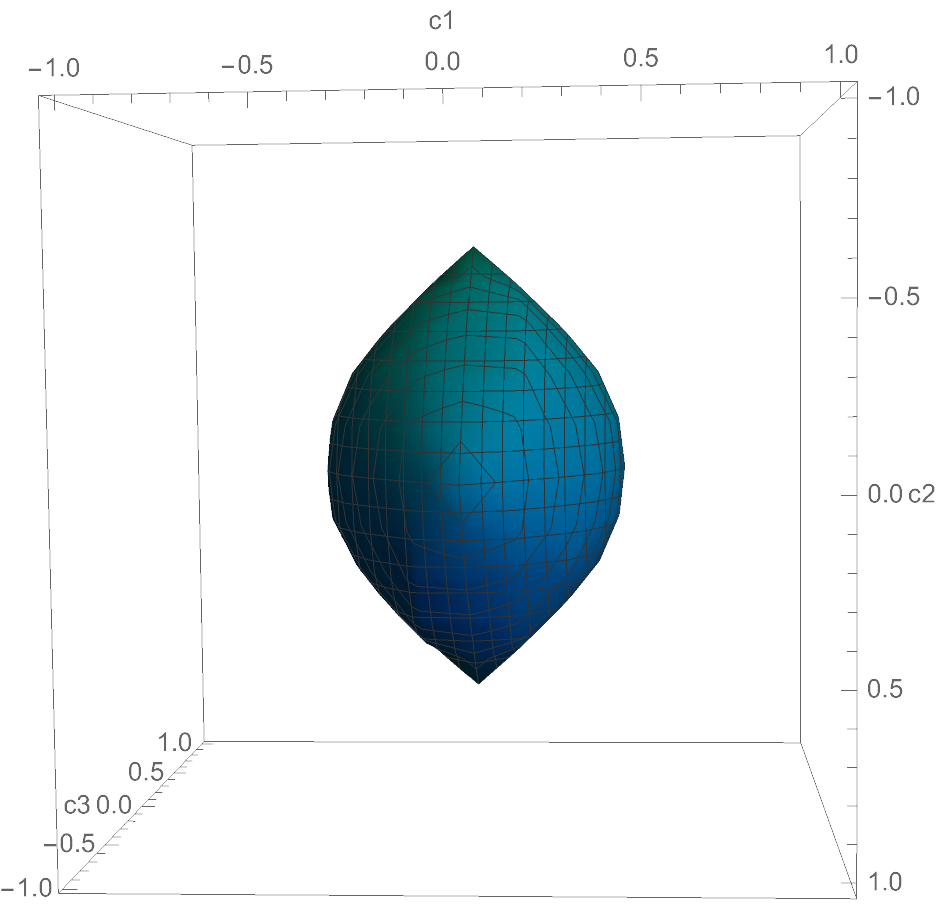}}
\qquad
\subfigure[Structure of BD AS states after switching action for Unitary corresponding to $\theta=\pi/3$]{\includegraphics[scale=0.2]{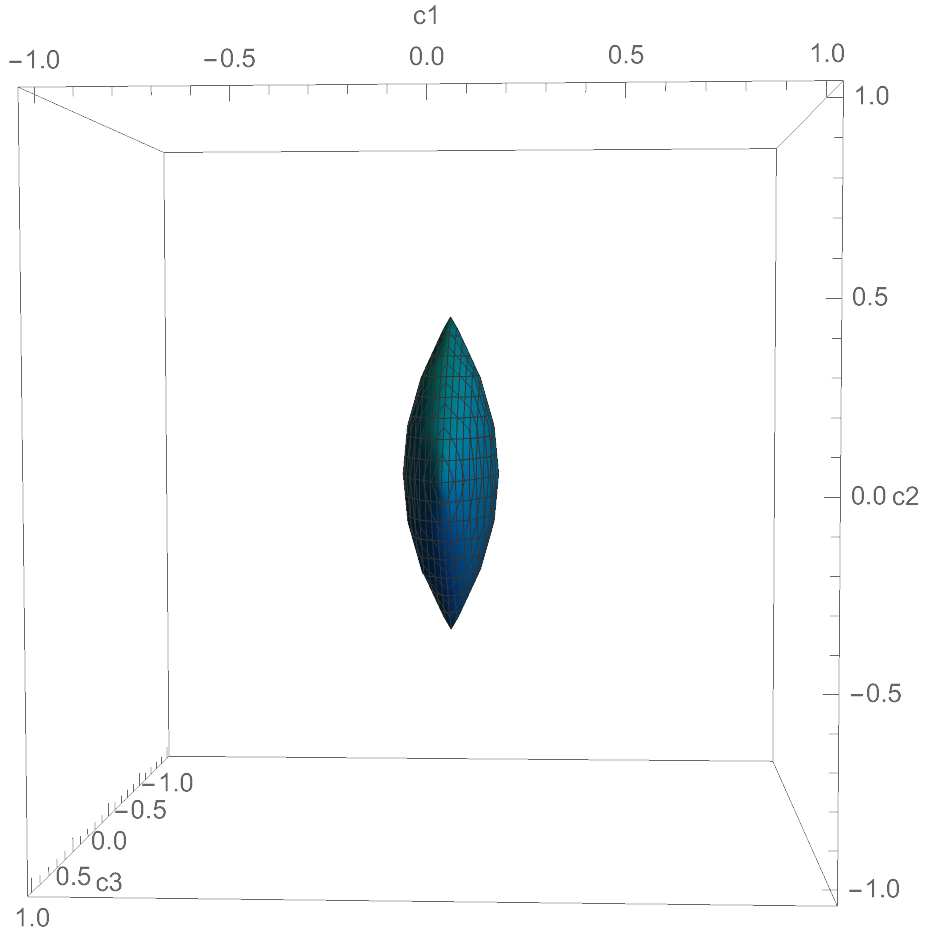}}
}
\caption{\footnotesize{Geometry of BD AS states before and after switching action of global unitary matrices.}}
\label{BD_AS_sep_switch}
\end{figure*}


\begin{figure*}[htp]
\centering
\fbox{
\subfigure[BD states for $\alpha=0.17$ to $0.5$. Here, $N=(\alpha-0.17)/0.01$.]{\includegraphics[scale=0.22]{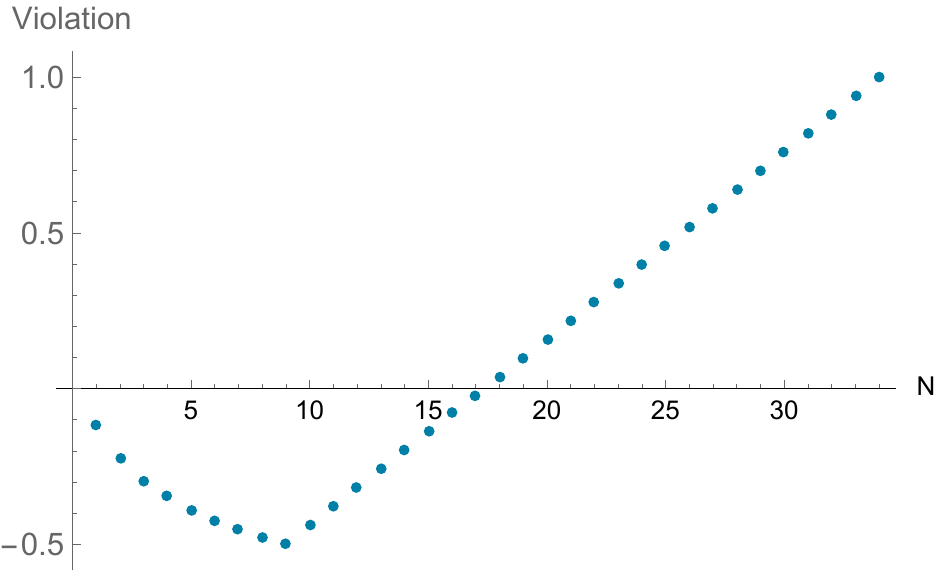}}
\qquad
\subfigure[For $\alpha=0.18$ after Switching action. (Here, $N$ is number of random unitary.)]{\includegraphics[scale=0.22]{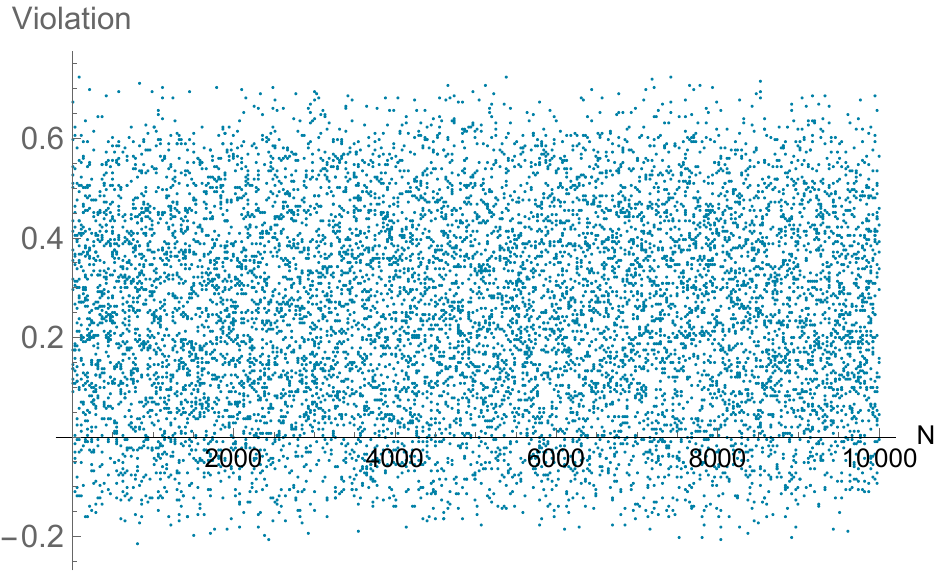}}
\qquad
\subfigure[For $\alpha=0.20$ after Switching action. (Here, $N$ is number of random unitary.)]{\includegraphics[scale=0.22]{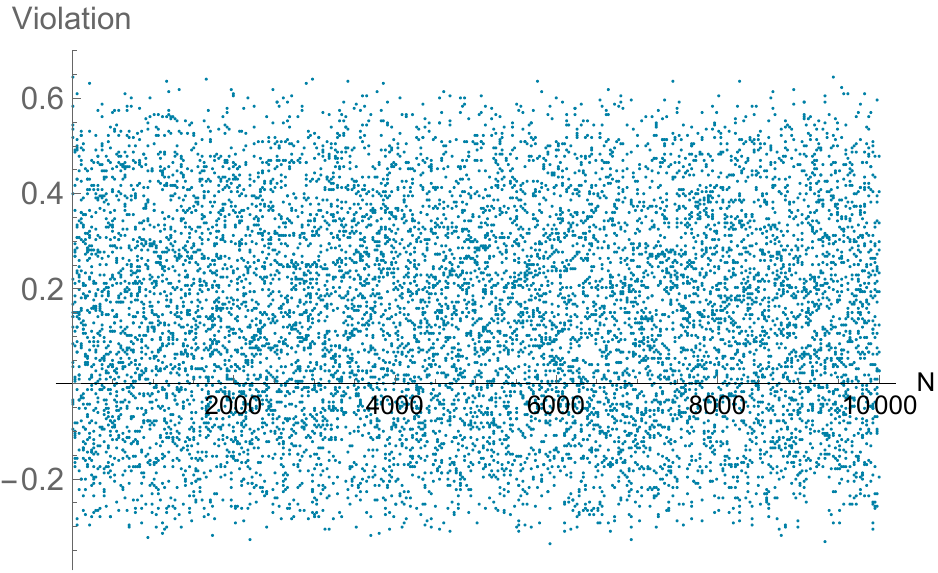}}
\qquad
\subfigure[For $\alpha=0.26$ after Switching action. (Here, $N$ is number of random unitary.)]{\includegraphics[scale=0.22]{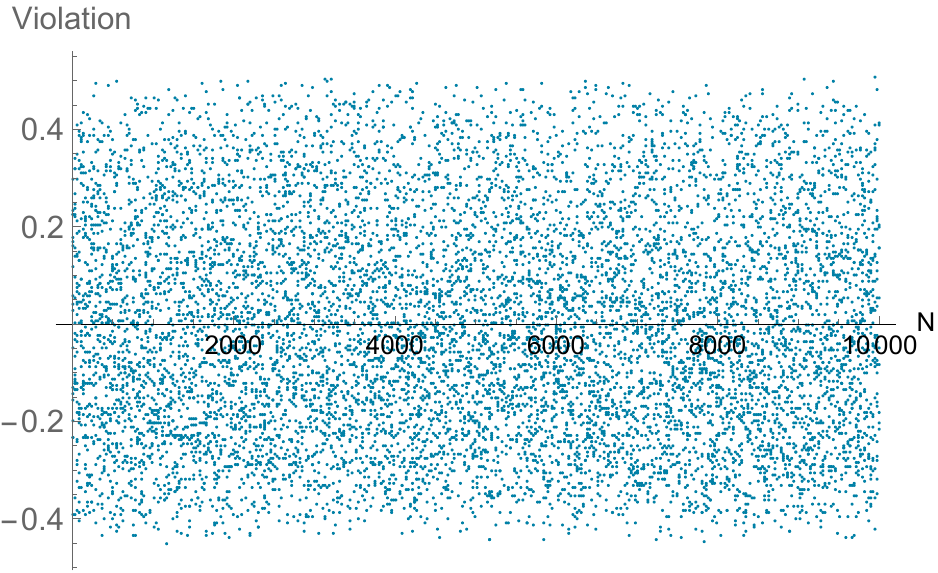}}}
\caption{\footnotesize{A particular type of BD state and Switching action of random unitary and CNOT gate on them.}}
\label{BD_random}
\end{figure*}

\noindent Comparing equations (\ref{BD_state_p}) and (\ref{BD_state_c}), one can get the constraints on the coefficients $c_1$, $c_2$ and $c_3$ which eventually indicates the region describing the separable states and the AS states \cite{LC_10}. For the state to be separable, we have $|c_1|+|c_2|+|c_3| \leq 1$ which gives a tetrahedron structure as illustrated in Fig. (\ref{BD_AS_sep_switch} (a)). Explicitly, $p_1=(1+c_1-c_2+c_3)/4$, $p_2=(1+c_1+c_2-c_3)/4$, $p_3=(1-c_1+c_2+c_3)/4$ and $p_4=(1-c_1-c_2-c_3)/4$. Putting the eigenvalue condition (\ref{AS_eigenvalue}) on these mixing parameters along with the condition of positivity, we find the conditions on $c_1$, $c_2$ and $c_3$ for $\rho_{BD}$ to be an AS state which enables us to obtain the structure of AS states as also illustrated in Fig. (\ref{BD_AS_sep_switch} (a)). 
\begin{result}
    If the initial state is $\rho_{BD}$ given in Eq.(\ref{BD_state_c}) and it undergoes the switching operation between two unitary channels, one being CNOT and the other given in Eq. (\ref{unitary_theta}), then it is possible to change the structure of the set of AS BD states for varying the parameter $\theta$ in Eq. (\ref{unitary_theta}) as shown in Fig. (\ref{BD_AS_sep_switch} (b), (c), (d)). 
\end{result}
For the understanding of the result we refer to Fig. (\ref{BD_AS_sep_switch}). Let us now note some observations that can be made from the mentioned figure. Firstly, the size of the set of AS BD states decreases with the increasing $\theta$ value. Eventually, it can been seen for $\theta=\pi/2$ the set completely vanishes which implies that all the BD AS states can be made atleast separable under suitable choice of global unitary matrices.\\
Now let us consider a particular case of BD state. We choose, three of the mixing parameter having the same value i.e. $p_2=p_3=p_4=\frac{1}{2}-\alpha$ making, $p_1=3\alpha-1/2$ with $\alpha$ running from $1/6$ to $1/2$. In this case, we plot the eigenvalue condition given in Eq. (\ref{AS_eigenvalue}) to find the range for AS states as shown in Fig. (\ref{BD_random}(a)). Then for some particularly chosen values of $\alpha$ we take CNOT gate along with a randomly generated unitary (Haar uniformly generated) and check the switching action on the given state as shown in Fig. (\ref{BD_random} (b), (c), (d)). It is clear from the plots that as $\alpha$ increases, the number of effective unitary matrices (though there always exists many) decreases. It is intuitive as from plot (a) in Fig. (\ref{BD_random}) the lower values of $\alpha$ are nearer to the boundary of the set of AS states. 
\subsection*{Results on higher dimensions}
\noindent  For the dimensions $2\otimes d$ we consider the maximally mixed state i.e. $\mathcal{I}_{2\otimes d}/2d$ as the initial state. This is definitely AS and moreover this is a free state in every resource theory. Then we Haar uniformly generate two unitary matrices and use them in switching action on the initial state. We explicitely deal with three cases of dimensions $2\otimes 3$, $2 \otimes 4$, and $2 \otimes 10$. The cases are plotted in Fig. (\ref{higher_d}). From the figure, it is evident that even in higher dimension where the condition given in Eq. (\ref{AS_eigenvalue}) holds, it is possible to find numerous number of global unitary matrices which can make AS states resourceful. Note that, the result in Proposition (\ref{prop2}) remains unchanged in these cases. 
\begin{figure*}[htp]
\centering
\fbox{
\subfigure[For $2\otimes 3$ dimension.]{\includegraphics[scale=0.32]{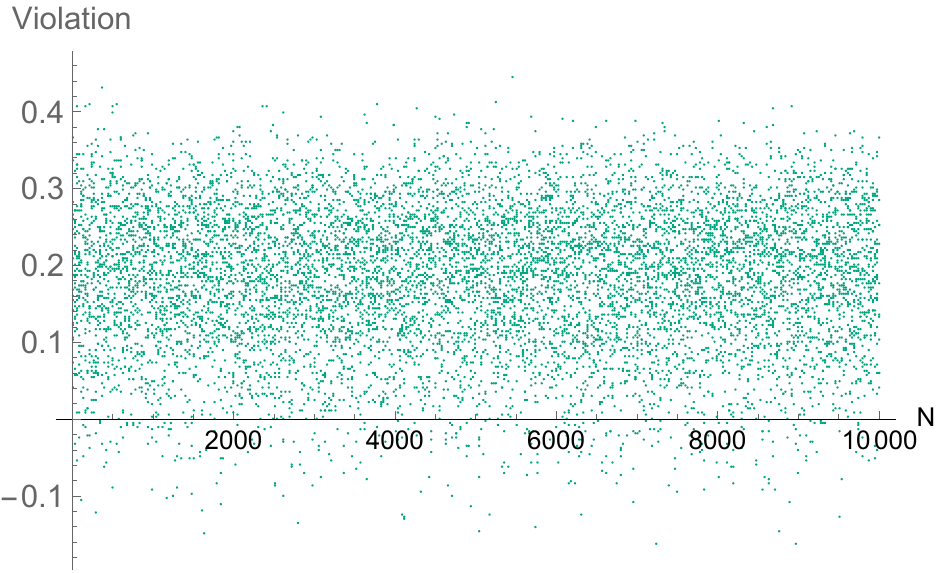}}
\qquad
\subfigure[For $2\otimes 4$ dimension.]{\includegraphics[scale=0.32]{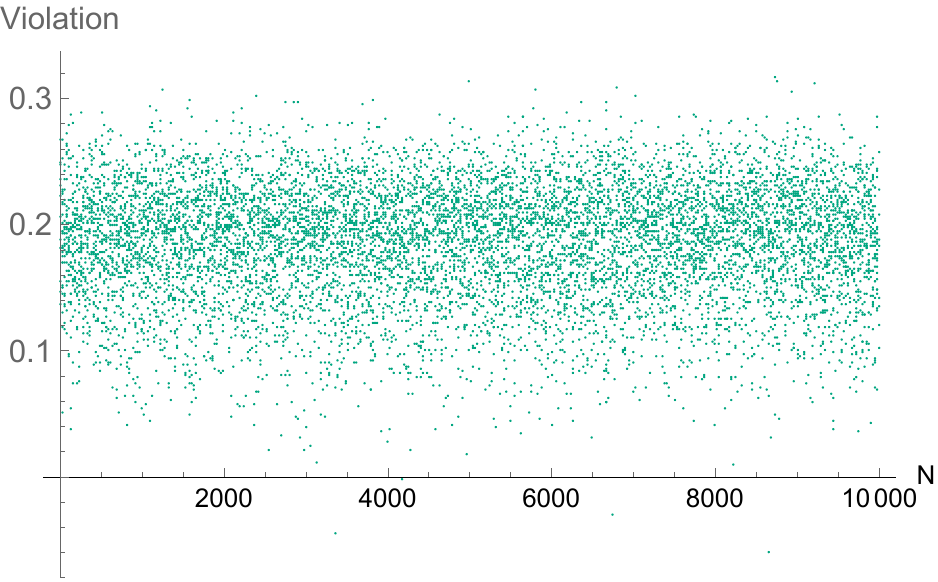}}
\qquad
\subfigure[For $2\otimes 10$ dimension.]{\includegraphics[scale=0.32]{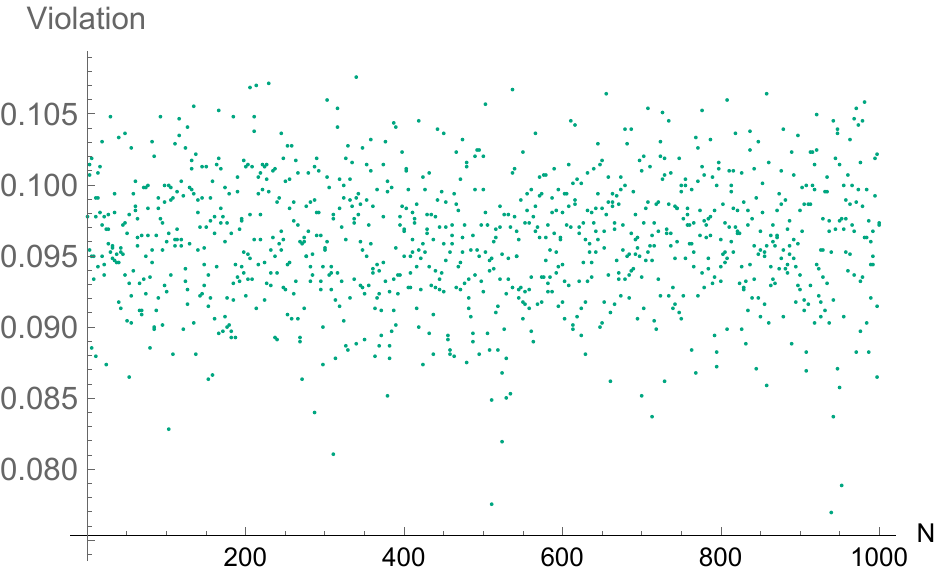}}}
\caption{\footnotesize{Breaking absolute separability in higher dimensions. The axes have their usual meaning.}}
\label{higher_d}
\end{figure*}



\section{Conclusion}

In the resource theory of non-absolute separability, global unitary matrices on the joint system represent free operations. Upon application of such operations, absolute separable states remain separable. The set of such states contains the maximally mixed state, which is treated as completely random and, hence, retains no viable information. In this paper, we show that it is possible to break absolute separability by switching action over global unitary operations. We start our study by considering a two-qubit quantum state residing on the boundary of the convex set of absolute separable states, and move on to more general versions of mixed absolute separable states. Furthermore, under suitable choices of the global unitaries, we show that it is possible to make even the maximally mixed state resourceful. Next, we move on to Bell diagonal states, where we outline the structure of AS states via the eigenvalue condition. Then, we characterize a particular class of BD states and establish that the switching action can take the AS states to the separable region. Our observation suggests that in most cases, the rank of the final state remains unchanged compared to the initial state, except for a few particular cases mentioned earlier. Along with the analytical studies over two-qubit AS quantum states, we have also done a thorough numerical study via randomly generated global unitaries acting on $2\otimes 3$, $2\otimes 4$ and $2\otimes 10$ dimensional states, validating our analytical results for higher dimensional quantum systems. Furthermore, following this direction, one can ask whether there is any constraint on the rank of the state for this switching action to be fruitful. Our study sheds light on the resourcefulness of quantum switch in the paradigm of entanglement resource theory, which in turn validates the seemingly vast applicability of switching action in quantum information science and technology in general.

\section*{Acknowledgement}

SG acknowledges partial support from the Department of Science and Technology, Government of India through the QuEST grant (grant number DST/ICPS/QUST/Theme-3/2019/120).

\bibliography{references.bib}

\end{document}